\documentclass{article} 
\usepackage{nips13submit_e,times}
\usepackage{hyperref}
\usepackage{url}

\usepackage{amsmath,amssymb,amsthm}
\usepackage{epsfig,graphics}
\usepackage{multicol,subfigure}
\usepackage{wrapfig, rotating}

\usepackage{algorithmic,algorithm}
\usepackage{amsthm}


\newcommand{\barr}{\left[ \begin{array} }
\newcommand{\earr}{ \end{array} \right] }

\newcommand{\ars}[1]{\left[ \begin{array}{#1}}
\newcommand{\are}{\end{array} \right] }
\newcommand{\oars}[1]{\begin{array}{#1}}
\newcommand{\oare}{\end{array}}

\newcommand{\eqs}{\begin{eqnarray}}
\newcommand{\eqe}{\end{eqnarray}}
\newcommand{\eqsn}{\begin{eqnarray*}}
\newcommand{\eqen}{\end{eqnarray*}}


\newcommand{\ens}{\begin{enumerate}}
\newcommand{\ene}{\end{enumerate}}

\newcommand{\its}{\begin{itemize}}
\newcommand{\ite}{\end{itemize}}

\newcommand{\des}{\begin{description}}
\newcommand{\dee}{\end{description}}

\usepackage[mathscr]{eucal}

\usepackage{amsbsy}

\usepackage{bm}

\usepackage{paralist}

\usepackage{xspace}


\newcommand{\Sec}[1]{\hyperref[sec:#1]{\S\ref*{sec:#1}}} 
\newcommand{\Eqn}[1]{\hyperref[eq:#1]{(\ref*{eq:#1})}} 
\newcommand{\Fig}[1]{\hyperref[fig:#1]{Figure~\ref*{fig:#1}}} 
\newcommand{\Tab}[1]{\hyperref[tab:#1]{Table~\ref*{tab:#1}}} 
\newcommand{\Thm}[1]{\hyperref[thm:#1]{Theorem~\ref*{thm:#1}}} 
\newcommand{\Lem}[1]{\hyperref[lem:#1]{Lemma~\ref*{lem:#1}}} 
\newcommand{\Prop}[1]{\hyperref[prop:#1]{Property~\ref*{prop:#1}}} 
\newcommand{\Cor}[1]{\hyperref[cor:#1]{Corollary~\ref*{cor:#1}}} 
\newcommand{\Def}[1]{\hyperref[def:#1]{Definition~\ref*{def:#1}}} 
\newcommand{\Alg}[1]{\hyperref[alg:#1]{Algorithm~\ref*{alg:#1}}} 
\newcommand{\Ex}[1]{\hyperref[ex:#1]{Example~\ref*{ex:#1}}} 


\newcommand{\Real}{\mathbb{R}}

\newcommand{\Tra}{^{\rm T}} 





\newcommand{\M}[1]{{\bm{\mathbf{\MakeUppercase{#1}}}}} 

\newcommand{\T}[1]{\boldsymbol{\mathscr{\MakeUppercase{#1}}}} 


\newcommand{\TV}{\T{V}}

\newtheorem{theorem}{Theorem}[section]
\numberwithin{theorem}{subsection}
\newtheorem{defn}{\indent \bf Definition}[section]
\numberwithin{defn}{subsection}
\newtheorem{conjecture}{Conjecture}[section]
\numberwithin{conjecture}{subsection}
\newtheorem{remark}{Remark}[section]
\numberwithin{remark}{subsection}

\title{Novel Factorization Strategies for Higher Order Tensors: Implications for Compression and Recovery of Multi-linear Data  }

\author{
Zemin Zhang, Gregory Ely, Shuchin Aeron \\
Department of ECE\\
Tufts University	\\
Medford, MA 2155 \\
\texttt{jamie.zeminzhang@gmail.com}\\
\texttt{gregoryely@gmail.com}  \\
\texttt{shuchin@ece.tufts.edu} \\
\And
Ning Hao and Misha Kilmer \\
Department of Mathematics \\
Tufts University \\
Medford, MA 02155\\
\texttt{ning.hao@tufts.edu}     \\
\texttt{misha.kilmer@tufts.edu} \\
}

\nipsfinalcopy 

\begin{document}

\maketitle

\begin{abstract}
In this paper we propose novel methods for compression and recovery of multilinear data under limited sampling. We exploit the recently proposed tensor-Singular Value Decomposition (t-SVD)\cite{KilmerBramanHao2011}, which is a group theoretic framework for tensor decomposition.  In contrast to popular existing tensor decomposition techniques such as higher-order SVD (HOSVD), t-SVD has optimality properties similar to the truncated SVD for matrices. Based on t-SVD, we first construct novel tensor-rank like measures to characterize informational and structural complexity of multilinear data. Following that we outline a complexity penalized algorithm for tensor completion from missing entries. As an application, 3-D and 4-D (color) video data compression and recovery are considered. We show that videos with linear camera motion can be represented more efficiently using t-SVD compared to traditional approaches based on vectorizing or flattening of the tensors. Application of the proposed tensor completion algorithm for video recovery from missing entries is shown to yield a superior performance over existing methods. In conclusion we point out several research directions and implications to online prediction of multilinear data.

\end{abstract}

\vspace{-5mm}
\section{Introduction}
\vspace{-2mm} 
This paper focuses on several novel methods for robust recovery of multilinear signals or tensors (essentially viewed as 2-D, 3-D,..., N-D data) under limited sampling and measurements. Signal recovery from partial measurements, sometimes also referred to as the problem of data completion for specific choice of measurement operator being a simple downsampling operation, has been an important area of research, not only for statistical signal processing problems related to inversion, \cite{Ely_ICASSP2013, Wright12-II,Ji_PAMI12}, but also in machine learning for online prediction of ratings, \cite{Hazan_JMLR12}. All of these applications exploit low structural and informational complexity of the data, expressed either as low rank for the 2-D matrices \cite{Ely_ICASSP2013,Wright12-II}, which can be extended to higher order data via \emph{flattenning or vectorizing} of the tensor data such as tensor N-rank \cite{GandyRY2011}, or other more general tensor-rank measures based on particular tensor decompositions such as higher oder SVD (HOSVD) or Tucker-3 and Canonical Decomposition (CANDECOMP). See \cite{SIREV} for a survey of these decompositions.

The key idea behind these methods is that under the assumption of low-rank of the underlying data thereby constraining the complexity of the hypothesis space, it should be feasible to recover data (or equivalently predict the missing entries) from number of measurements in proportion to the rank. Such analysis and the corresponding identifiability results are obtained by considering an appropriate complexity penalized recovery algorithm under observation constraints, where the measure of complexity, related to the notion of rank, comes from a particular \emph{factorization} of the data. Such algorithms are inherently combinatorial and to alleviate this difficulty one looks for the tightest convex relaxations of the complexity measure, following which the well developed machinery of convex optimization as well as convex analysis can be employed to study the related problem. For example, rank of the 2-D matrix being relaxed to the Schatten 1-norm, \cite{Watson92} and tensor $N$-rank for order $N > 2$ tensors being relaxed to overlapped Schatten p-norms, \cite{GandyRY2011}. Similarly one can extend this approach to HOSVD and take the sum of absolute values of the entries of the core tensor as a (relaxed) measure of data complexity.

Note that all of the current approaches to handle multilinear data extend the nearly optimal 2-D SVD\footnote{Optimality of 2-D SVD is based on the optimality of truncated SVD as the best $k$-dimensional $\ell_2$ approximation.} based vector space approach to the higher order ($N > 2$) case. This results in loss of optimality in the representation.  In contrast, our approach is based upon recent results on decomposition/factorization of tensors in \cite{Braman2010,KilmerMartin2010,KilmerBramanHao2011} in which the authors refer to as tensor-SVD or t-SVD for short. In the t-SVD framework we view 3D tensors as a matrix of tubes (in the third dimension) and define a commutative operation (convolution) between the tubes. This commutative structure leads to viewing the tensor tensor multiplication as a simple matrix-matrix multiplication where the multiplication operation is defined via the commutative operation. With this construction, one can now introduce the notion of a t-SVD which is similar to the traditional SVD, see Figure~\ref{fig:tSVD}. In this context we identify two (relaxed) measures of structural complexity, namely the Tensor-Nuclear-Norm (TNN) first used in \cite{semerci1} for multi-energy CT application and Tensor Tubal Norm (TTN) which extends the notion of SVD based rank to the number of non-zero singular tubes in t-SVD. 

This paper is organized as follows. Section \ref{sec:tSVD} presents the notations and provide an overview and key results on t-SVD from \cite{Braman2010, KilmerMartin2010, KilmerBramanHao2011} and illustrates the key differences and advantages over other tensor decomposition methods. We will propose measures of structural complexity based on t-SVD. Using these measures, in Section \ref{sec:tcompletion} we introduce a complexity penalized algorithm for tensor completion from randomly sampled data cube. As an application we consider video data completion from missing entries and Section \ref{sec:video_app} shows the compression performance of t-SVD based representation on several video data sets and evaluates the performance of the proposed algorithm for video completion then compares it to current approaches. Finally we outline implications of this work and future research directions in Section \ref{sec:future_work}.

\vspace{-2mm} 

\section{Brief overview of t-SVD}
\label{sec:tSVD}
\vspace{-2mm} 
In this section, we describe the tensor decomposition as proposed in \cite{Braman2010,KilmerMartin2010,KilmerBramanHao2011} and the notations used throughout the paper. 
\vspace{-2mm} 
\subsection{Notation and Indexing}
\vspace{-2mm} 
A \emph{\textbf{Slice}} of an N-dimensional tensor is a 2-D section defined by fixing all but two indices. A \emph{\textbf{Fiber}} of an N-dimensional tensor is a 1-D section defined by fixing all indices but one \cite{SIREV}. For a third order tensor $\T{A}$, we will use the Matlab notation $\T{A}(k,:,:)$ , $\T{A}(:,k,:)$ and $\T{A}(:,:,k)$ to denote the $k_{th}$ horizontal, lateral and frontal slices, and $\T{A}(:,i,j)$, $\T{A}(i,:,j)$ and $\T{A}(i,j,:)$ to denote the $(i,j)_{th}$ mode-1, mode-2 and mode-3 fiber.  In particular, we use $\T{A}^{(k)}$ to represent $\T{A}(:,:,k)$.
\begin{defn}
\emph{\textbf{Tensor Transpose}}.  Let $\T{A}$ be a tensor of size $n_1 \times n_2 \times n_3$, then $\T{A} \Tra$ is the $n_2 \times n_1 \times n_3$ tensor obtained by transposing each of the frontal slices and then reversing the order of transposed frontal slices $2$ through $n_3$.
\end{defn}
\begin{defn} \emph{\textbf{Identity Tensor}}. The identity tensor $\T{I} \in \mathbb{R}^{n_1 \times n_1 \times n_3}$ is a tensor whose first frontal slice is the $n_1 \times n_1$ identity matrix and  all other frontal slices are zero.
\end{defn}
\begin{defn} \emph{\textbf{f-diagonal Tensor}}. A tensor is called f-diagonal if each frontal slice of the tensor is a diagonal matrix.
\end{defn}
We can view a 3-D tensor of size $n_1 \times n_2 \times n_3$ as an $n_1 \times n_2$ matrix of tubes. By introducing a commutative operation $*$ between the tubes $\mathbf{a}, \mathbf{b}\in \Real^{1 \times 1 \times n_3}$  via $\mathbf{a} * \mathbf{b} = \mathbf{a} \circ \mathbf{b}$ where $\circ$ denotes the \emph{circular convolution} between the two vectors, we can define the t-product between two tensors.  This multiplication is analogous to the matrix multiplication except that circular convolution replaces the multiplication operation between the elements (which are tubes now).
\begin{defn} \emph{\textbf{t-product}}.  
The t-product $\T{C}$ of $\T{A} \in \mathbb{R}^{n_1 \times n_2 \times n_3}$ and $\T{B} \in \mathbb{R}^{n_2\times n_4 \times n_3}$ is a tensor of size $n_1 \times n_4 \times n_3$ where the $(i,j)_{th}$ tube denoted by $\T{C}(i,j,:)$ for $i = 1,2, ..., n_1$ and $j = 1, 2,..., n_4$ of the tensor $\T{C}$ is given by $\sum_{k=1}^{n_2} \T{A}(i,k,:) * \T{B}(k,j,:)$.
\end{defn}
The t-product of $\T{A}$ and $\T{B}$ can be computed efficiently by performing the fast Fourier transformation (FFT) along the tube fibers of $\T{A}$ and $\T{B}$ to get $\hat{\T{A}}$ and $\hat{\T{B}}$,  multiplying the each pair of the frontal slices of $\hat{\T{A}}$ and $\hat{\T{B}}$ to obtain $\hat{\T{C}}$, and then taking the inverse FFT along the third mode to get the result. For details about the computation, see \cite{KilmerMartin2010, KilmerBramanHooverHao, HaoKilmerBramanHoover2012}.
\begin{defn} \emph{\textbf{Orthogonal Tensor}}. A tensor $\T{Q}\in\mathbb{R}^{n_1\times n_1\times n_3}$ is orthogonal if
\begin{equation}
\T{Q} \Tra * \T{Q} = \T{Q} * \T{Q} \Tra  = \T{I}
\end{equation}
\end{defn}

\subsection{Tensor Singular Value Decomposition (t-SVD)}
\vspace{-2mm} 
The new t-product  allows us to define a tensor Singular Value Decomposition (t-SVD). 
\begin{theorem}
For $\T{M}\in\mathbb{R}^{n_1 \times n_2 \times n_3}$, the t-SVD of $\T{M}$ is  given by
\begin{equation}
\T{M} = \T{U} *\T {S} *\T{V}\Tra
\end{equation}
\noindent where $\T{U}$ and $\TV$ are orthogonal tensors of size $n_1 \times n_1 \times n_3 $ and  $n_2 \times n_2 \times n_3 $ respectively. $\T{S}$ is a rectangular $f$-diagonal tensor of size $n_1 \times n_2 \times n_3 $, and $*$ denotes the t-product. 
\end{theorem}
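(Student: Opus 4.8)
The plan is to pass to the Fourier domain along the third mode, where the $*$-product becomes an ordinary slice-wise matrix product, perform a matrix SVD on each frontal slice there, and transform back. Let $\hat{\T{M}}$ denote the tensor obtained by applying the DFT to every mode-3 fiber of $\T{M}$. As recalled above, for conformable $\T{A},\T{B}$ one has $\widehat{\T{A}*\T{B}}^{(k)} = \hat{\T{A}}^{(k)}\hat{\T{B}}^{(k)}$ for each $k=1,\dots,n_3$, i.e.\ the $*$-product is block-diagonalized by the DFT. Alongside this I would record three more dictionary entries: $\T{Q}$ is orthogonal iff every frontal slice $\hat{\T{Q}}^{(k)}$ is a unitary matrix; $\T{S}$ is f-diagonal iff every $\hat{\T{S}}^{(k)}$ is a diagonal matrix (the DFT acts fiber-wise, so the zero off-diagonal tubes remain zero after transforming, and vice versa); and $\widehat{\T{A}\Tra}^{(k)} = (\hat{\T{A}}^{(k)})^{*}$, the conjugate transpose — a short computation shows this is exactly what the ``reverse the transposed slices $2$ through $n_3$'' clause in the definition of the tensor transpose encodes once the DFT is applied.

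Given this dictionary the construction is immediate. For each $k$ take a full matrix SVD $\hat{\T{M}}^{(k)} = \hat{U}_k\,\hat{\Sigma}_k\,\hat{V}_k^{*}$ with $\hat{U}_k\in\mathbb{C}^{n_1\times n_1}$ and $\hat{V}_k\in\mathbb{C}^{n_2\times n_2}$ unitary and $\hat{\Sigma}_k\in\mathbb{R}^{n_1\times n_2}$ nonnegative diagonal; assemble $\hat{\T{U}},\hat{\T{S}},\hat{\T{V}}$ from these slices and define $\T{U},\T{S},\T{V}$ by the inverse DFT along mode 3. Then $\widehat{\T{U}*\T{S}*\T{V}\Tra}^{(k)} = \hat{U}_k\hat{\Sigma}_k\hat{V}_k^{*} = \hat{\T{M}}^{(k)}$ for every $k$, so $\T{U}*\T{S}*\T{V}\Tra = \T{M}$; orthogonality of $\T{U}$ and $\T{V}$ and f-diagonality of $\T{S}$ follow slice-by-slice from the dictionary above, and the sizes $n_1\times n_1\times n_3$, $n_1\times n_2\times n_3$, $n_2\times n_2\times n_3$ are correct by construction.

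The one genuine subtlety — and the step I would be most careful about — is realness: the recipe above a priori yields complex tensors, and we must exhibit a choice of slice-wise SVDs making $\T{U},\T{S},\T{V}$ real. Because $\T{M}$ is real, its mode-3 DFT obeys the conjugate-symmetry relation $\hat{\T{M}}^{(k)} = \overline{\hat{\T{M}}^{(n_3+2-k)}}$ for $k=2,\dots,n_3$, with $\hat{\T{M}}^{(1)}$ (and $\hat{\T{M}}^{(n_3/2+1)}$ when $n_3$ is even) real. So I would first take ordinary real SVDs of the real slices, and for each remaining conjugate pair compute an SVD of one slice and \emph{define} the SVD of its partner by entrywise conjugation, $\hat{U}_{n_3+2-k}:=\overline{\hat{U}_k}$, $\hat{V}_{n_3+2-k}:=\overline{\hat{V}_k}$, $\hat{\Sigma}_{n_3+2-k}:=\hat{\Sigma}_k$. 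This makes the assembled $\hat{\T{U}},\hat{\T{S}},\hat{\T{V}}$ conjugate-symmetric along mode 3, hence their inverse DFTs are real; unitarity and diagonality are preserved under conjugation, so the identities of the previous paragraph survive, and the singular values are real and nonnegative automatically from the matrix SVD. Everything else — conformability and the final bookkeeping of sizes — is routine.
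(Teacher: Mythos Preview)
Your proposal is correct and follows exactly the route the paper indicates: the paper does not give a formal proof but simply points to Algorithm~1, which is precisely your construction---take the DFT along the third mode, perform a matrix SVD on each frontal slice, and invert the transform. You have supplied the verification (the Fourier/$*$-product dictionary, orthogonality, f-diagonality, and the conjugate-symmetry argument for realness) that the paper leaves implicit or defers to the cited references.
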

\vspace{-1.5mm}
We can obtain this decomposition by computing matrix SVDs in the Fourier domain, see Algorithm 1. The notation in the algorithm can be found in\cite{Martin_SIAM2013}. 
Figure \ref{fig:tSVD} illustrates the decomposition for the 3-D case.
\begin{algorithm}
\label{alg:tSVD - for $N$-D tensor}
  \caption{t-SVD}
  \begin{algorithmic}
  \STATE \textbf{Input: } $\T{M} \in \mathbb{R}^{n_1 \times n_2  ... \times n_N}$
  \STATE $\rho = n_3n_4...n_N$
   \FOR{$i = 3 \hspace{2mm} \rm{to} \hspace{2mm} N$}
  	\STATE ${\T{D}} \leftarrow \rm{fft}(\T{M},[\hspace{1mm}],i)$;
  \ENDFOR
  
  \FOR{$i = 1 \hspace{2mm} \rm{to} \hspace{2mm} \rho$}
  	\STATE $ [\M{U}, \M{S}, \M{V}] = svd(\T{D}(:,:,i))$
  	\STATE $ {\hat{\T{U}}}(:,:,i) = \M{U}; \hspace{1mm} {\hat{\T{S}}}(:,:,i) = \M{S}; \hspace{1mm} {\hat{\T{V}}}(:,:,i) = \M{V}; $
  \ENDFOR
  
  \FOR{$i = 3 \hspace{2mm} \rm{to} \hspace{2mm} \rho$}
  	\STATE $\T{U} \leftarrow \rm{ifft}(\hat{\T{U}},[\hspace{1mm}],i); \hspace{1mm} \T{S} \leftarrow \rm{ifft}(\hat{\T{S}},[\hspace{1mm}],i); \hspace{1mm} \T{V} \leftarrow \rm{ifft}(\hat{\T{V}},[\hspace{1mm}],)i$;
  \ENDFOR
  \end{algorithmic}
\end{algorithm}

%
%

\begin{figure}[htbp]
\centering \makebox[0in]{
    \begin{tabular}{c c}
      \includegraphics[scale=0.55]{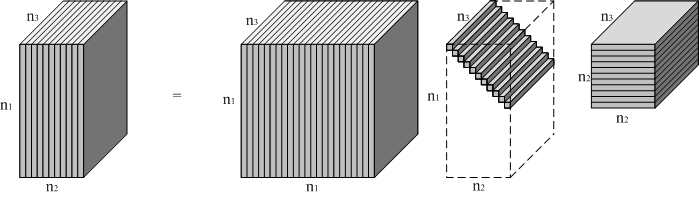}
 \end{tabular}}
  \caption{ The t-SVD of an $n_1 \times n_2 \times n_3$ tensor.}
  \label{fig:tSVD}
\end{figure}

\vspace{-2mm} 
\subsection{t-SVD: Fundamental theorems and key results}
\vspace{-2mm} 
The two widely used tensor decompositions, Tucker and PARAFAC\cite{SIREV} are usually seen as a higher order SVD for tensors. Both of these decompositions have several disadvantages. In particular, one cannot  easily determine the rank-one components of the PARAFAC decomposition and given a fixed rank, calculation of an approximation can be  numerically unstable. The tensor-train form of the Tucker decomposition is studied in \cite{Oseledets} as an alternative form. Tucker decomposition can be seen as a generalization of PARAFAC decomposition, and the truncated decomposition doesn't yield the best fit of the original tensor. In contrast, the t-SVD can be easily computed by solving several SVDs in the Fourier domain. More importantly, it gives an optimal approximation of a tensor measured by the Frobenius norm of the difference, as state in the following theorem\cite{KilmerMartin2010, KilmerBramanHooverHao, HaoKilmerBramanHoover2012}. 

\begin{theorem} Let the t-SVD of $\T{M} \in \mathbb{R}^{n_1 \times n_2 \times n_3}$ be 
given by $\T{M} = \T{U} * \T{S} * \T{V}^T$ and for $k < \min(n_1,n_2)$ define
$\T{M}_k = \sum_{i=1}^{k} \T{U}(:,i,:) * \T{S}(i,i,:) * \T{V}(:,i,:)^T $, 
Then
$$\T{M}_k = \arg \min_{\tilde{\T{M}} \in \mathbb{M}} \| \T{M} - \tilde{\T{M}} \|_F $$
where $\mathbb{M} = \{ \T{C} = \T{X} * \T{Y} |
\T{X} \in \mathbb{R}^{n_1 \times k \times n_3}, \T{Y} \in \mathbb{R}^{k 
\times n_2 \times n_3} \}$.
\end{theorem}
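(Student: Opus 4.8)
The plan is to move the entire problem into the Fourier domain along the third mode. There the $t$-product becomes a frontal-slice-by-frontal-slice matrix product, the $t$-SVD becomes a frontal-slice-by-frontal-slice matrix SVD, and the Frobenius norm decouples as a sum over frontal slices; the statement then follows from the classical Eckart--Young--Mirsky theorem applied independently to each slice.

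First I would assemble the Fourier-domain dictionary, all of which is contained in \cite{KilmerMartin2010} and in the construction behind Algorithm~1. Write $\hat{\T{A}}=\mathrm{fft}(\T{A},[\,],3)$ and let $\hat{\T{A}}^{(i)}$ be its $i$-th frontal slice. Since the unnormalized DFT along the tube fibers is an isometry up to the scalar $1/n_3$, one has $\|\T{A}\|_F^2=\tfrac{1}{n_3}\sum_{i=1}^{n_3}\|\hat{\T{A}}^{(i)}\|_F^2$. The $t$-product and the tensor transpose satisfy $\widehat{\T{X}*\T{Y}}^{(i)}=\hat{\T{X}}^{(i)}\hat{\T{Y}}^{(i)}$ and $\widehat{\T{V}^T}^{(i)}=(\hat{\T{V}}^{(i)})^H$, so the $t$-SVD $\T{M}=\T{U}*\T{S}*\T{V}^T$ reads, slice-wise, $\hat{\T{M}}^{(i)}=\hat{\T{U}}^{(i)}\hat{\T{S}}^{(i)}(\hat{\T{V}}^{(i)})^H$; orthogonality of $\T{U}$ and $\T{V}$ makes $\hat{\T{U}}^{(i)},\hat{\T{V}}^{(i)}$ unitary, so this is an honest matrix SVD of $\hat{\T{M}}^{(i)}$, and the Fourier slices of $\T{M}_k$ are precisely the rank-$k$ truncated SVDs of the $\hat{\T{M}}^{(i)}$, exactly as Algorithm~1 computes them.

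Next I would prove membership and then optimality. For membership, taking $\T{X}=\T{U}(:,1{:}k,:)*\T{S}(1{:}k,1{:}k,:)\in\mathbb{R}^{n_1\times k\times n_3}$ and $\T{Y}=\T{V}(:,1{:}k,:)^T\in\mathbb{R}^{k\times n_2\times n_3}$ gives $\T{M}_k=\T{X}*\T{Y}$, so $\T{M}_k\in\mathbb{M}$. For optimality, let $\tilde{\T{M}}=\T{X}*\T{Y}$ be an arbitrary element of $\mathbb{M}$. By Parseval, $\|\T{M}-\tilde{\T{M}}\|_F^2=\tfrac{1}{n_3}\sum_{i=1}^{n_3}\|\hat{\T{M}}^{(i)}-\hat{\T{X}}^{(i)}\hat{\T{Y}}^{(i)}\|_F^2$, and each $\hat{\T{X}}^{(i)}\hat{\T{Y}}^{(i)}$ is an $n_1\times n_2$ matrix of rank at most $k$ because it factors through a $k$-dimensional space. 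Eckart--Young--Mirsky bounds each summand below by $\sum_{j>k}(\sigma_j^{(i)})^2$, with $\sigma_j^{(i)}$ the singular values of $\hat{\T{M}}^{(i)}$, and this bound is attained by the truncated SVD, i.e.\ by the $i$-th Fourier slice of $\T{M}_k$. Summing over $i$ yields $\|\T{M}-\tilde{\T{M}}\|_F^2\ge\tfrac{1}{n_3}\sum_i\sum_{j>k}(\sigma_j^{(i)})^2=\|\T{M}-\T{M}_k\|_F^2$, so $\T{M}_k\in\arg\min_{\tilde{\T{M}}\in\mathbb{M}}\|\T{M}-\tilde{\T{M}}\|_F$.

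I expect the only real friction to be the Fourier-domain bookkeeping of the second step: tracking the $1/n_3$ from Parseval, getting the transpose-versus-conjugate-transpose convention right so that $\hat{\T{M}}^{(i)}=\hat{\T{U}}^{(i)}\hat{\T{S}}^{(i)}(\hat{\T{V}}^{(i)})^H$ is a genuine SVD with unitary factors, and noting that the slice-wise SVDs can be (and in Algorithm~1 are) chosen so that $\T{U},\T{S},\T{V}$ --- hence $\T{M}_k$ --- are real; with that in hand, Eckart--Young finishes the argument immediately. I would also flag the standard caveat that when some $\hat{\T{M}}^{(i)}$ has repeated singular values the minimizer is not unique, so the conclusion is ``$\T{M}_k$ attains the minimum,'' while the optimal value $\tfrac{1}{n_3}\sum_i\sum_{j>k}(\sigma_j^{(i)})^2$ is unique.
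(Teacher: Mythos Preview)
Your argument is correct and is exactly the route taken in the references the paper cites for this result \cite{KilmerMartin2010,KilmerBramanHooverHao,HaoKilmerBramanHoover2012}: diagonalize the $t$-product by taking the DFT along the third mode, use Parseval to decouple the Frobenius norm over frontal slices, and apply Eckart--Young--Mirsky slice-wise. The paper itself does not give an in-text proof of this theorem but defers to those references, so there is nothing further to compare.
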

\vspace{-2mm} 
\subsection{Measures of tensor complexity using t-SVD }
\vspace{-2mm} 
We  now define two measures of tensor complexity based on the proposed t-SVD: the tensor multi-rank, proposed in  \cite{KilmerBramanHooverHao}, and the novel tensor tubal rank.
\begin{defn} \textbf{Tensor multi-rank}. The multi-rank of $\T{A}\in\mathbb{R}^{n_1\times n_2\times n_3}$ is a vector $p$ $\in \Real^{n_3 \times 1}$ with the $i_{th}$ element equal to the rank of the $i_{th}$ frontal slice of $\hat{\T{A}}$ obtained by taking the Fourier transform along the third dimension of the tensor. 
\end{defn}
One can obtain a scalar measure of complexity as the $\ell_1$ norm of the tensor multi-rank. We now define another measure motivated by the matrix SVD. 
\begin{defn} \textbf{Tensor tubal-rank}. The tensor tubal rank of a 3-D tensor is defined to be the number of non-zero tubes of $\T{S}$ in the t-SVD factorization.
\end{defn}
As in the matrix case,  practical applications of these complexity measures require adequate convex relaxations. To this end we have the following result for the Tensor multi-rank. 

\begin{theorem}
The tensor-nuclear-norm (TNN) denoted by $||\T{A}||_{TNN}$ and defined as the sum of the singular values of all the frontal slices of $\hat{\T{A}}$ is a norm and is the tightest convex relaxation to $\ell_1$ norm of the tensor multi-rank.
\end{theorem}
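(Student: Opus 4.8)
The statement contains two assertions, which I would treat in turn. Throughout, write $\|\T{A}\|_{TNN}=\sum_{i=1}^{n_3}\|\hat{\T{A}}^{(i)}\|_{*}$, where $\|\cdot\|_{*}$ is the matrix nuclear norm applied to the (in general complex) frontal slices of $\hat{\T{A}}$ and $\hat{\cdot}$ denotes the DFT along the third mode, and write $p(\T{A})$ for the tensor multi-rank, so that $\|p(\T{A})\|_1=\sum_{i=1}^{n_3}\Rk{\hat{\T{A}}^{(i)}}$.

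\emph{TNN is a norm.} Since the third-mode DFT is a fixed \emph{invertible linear} operator, $\widehat{\alpha\T{A}+\beta\T{B}}=\alpha\hat{\T{A}}+\beta\hat{\T{B}}$. Absolute homogeneity and the triangle inequality then follow slicewise from the same properties of the matrix nuclear norm and summing over $i$; and $\|\T{A}\|_{TNN}=0$ forces every $\hat{\T{A}}^{(i)}=\mathbf 0$, hence $\hat{\T{A}}=\mathbf 0$, hence $\T{A}=\mathbf 0$ by invertibility of the DFT. So TNN is a norm on $\Real^{n_1\times n_2\times n_3}$.

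\emph{TNN is the tightest convex relaxation of $\|p(\cdot)\|_1$.} First I would pass to a matrix picture: block-diagonalizing the block-circulant matrix $\bcirc(\T{A})$ by the normalized DFT is a unitary conjugation onto $\mathrm{blkdiag}\big(\hat{\T{A}}^{(1)},\dots,\hat{\T{A}}^{(n_3)}\big)$, and since rank, nuclear norm and operator $2$-norm are all invariant under unitary conjugation,
\[
\|p(\T{A})\|_1=\Rk{\bcirc(\T{A})},\qquad \|\T{A}\|_{TNN}=\|\bcirc(\T{A})\|_{*},\qquad \|\T{A}\|_{2}:=\max_i\|\hat{\T{A}}^{(i)}\|_{2}=\|\bcirc(\T{A})\|_{2}.
\]
As in the matrix results of Fazel and of Recht--Fazel--Parrilo, the relaxation is understood on the spectral-norm ball $\mathcal B=\{\T{A}:\|\T{A}\|_{2}\le 1\}$, and ``tightest convex relaxation'' means the convex envelope (largest convex minorant) of $\|p(\cdot)\|_1$ on $\mathcal B$. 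The easy direction: on $\mathcal B$ one has $\|\hat{\T{A}}^{(i)}\|_{*}\le\Rk{\hat{\T{A}}^{(i)}}\,\|\hat{\T{A}}^{(i)}\|_{2}\le\Rk{\hat{\T{A}}^{(i)}}$, so $\|\T{A}\|_{TNN}\le\|p(\T{A})\|_1$ on $\mathcal B$, and TNN is convex by the first part; hence TNN is a legitimate convex underestimator. For maximality I would work in the Fourier coordinates: there $\mathcal B$ is a Cartesian product of matrix spectral-norm balls (one per frontal slice) and both $\|p(\cdot)\|_1$ and TNN are separable sums over the slices; since the convex envelope of a separable function over a product domain is the sum of the envelopes of the summands, and the matrix nuclear norm is exactly the convex envelope of $\Rk{\cdot}$ on the matrix spectral ball, the envelope of $\|p(\cdot)\|_1$ on $\mathcal B$ equals $\sum_i\|\hat{\T{A}}^{(i)}\|_{*}=\|\T{A}\|_{TNN}$.

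\emph{Expected obstacle.} The real work is bookkeeping around two points. First, the frontal slices of $\hat{\T{A}}$ are not free parameters: conjugate symmetry $\hat{\T{A}}^{(i)}=\overline{\hat{\T{A}}^{(n_3+2-i)}}$ (with one, or two, self-conjugate real slices) pairs them up, so the ``separable over a product'' step must be applied to these paired and self-conjugate blocks, using that $\Rk{\cdot}$ and $\|\cdot\|_{*}$ are conjugation-invariant and that the convex envelope of rank over the \emph{real} spectral ball is again the nuclear norm. Second, one must check that taking the convex envelope over the subspace of block-circulant matrices (equivalently, over conjugate-symmetric tuples) does not yield something strictly smaller than the restriction of $\|\bcirc(\cdot)\|_{*}$; the Fourier-coordinate separability argument is exactly what settles this, since it exhibits a convex minorant that already attains the value TNN everywhere on $\mathcal B$. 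Modulo these points, the statement reduces to the classical matrix fact.
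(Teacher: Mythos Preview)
Your approach is essentially the paper's: identify $\|p(\T{A})\|_1$ with $\Rk{\text{blkdiag}(\hat{\T{A}})}$ and $\|\T{A}\|_{TNN}$ with $\|\text{blkdiag}(\hat{\T{A}})\|_*$, then invoke the classical matrix fact that the nuclear norm is the convex envelope of rank on the spectral ball. The paper's proof stops at that identification (and outsources the norm axioms to a citation), whereas you supply the norm axioms directly via linearity of the DFT and, more importantly, confront a point the paper glosses over---namely that taking the convex envelope over the \emph{subspace} of conjugate-symmetric block-diagonal matrices could in principle differ from restricting the ambient envelope; your separability-over-a-product argument in Fourier coordinates, together with the conjugate-pair bookkeeping, is exactly what closes that gap, so your write-up is strictly more complete than the paper's own proof.
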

\begin{proof}
\vspace{-3mm}
The proof that TNN is a valid norm can be found in \cite{semerci1}. The $\ell_1$ norm of the tensor multi-rank is equal to $\text{rank}(\text{blkdiag}(\hat{\T{A}}))$,
for which the tightest convex relaxation is the the nuclear norm of  $\text{blkdiag}(\hat{\T{A}})$ which is TNN of $\T{A}$ by definition. Here $\text{blkdiag}(\hat{\T{A}})$ is a block diagonal matrix defined as follows:
\begin{equation}
\label{eq:blkdiag}
\text{blkdiag}( \hat{\T{A}} ) = 
  \left[\begin{array}{cccc}\hat{{\T{A}}}^{(1)}& & & \\
 & \hat{{\T{A}}}^{(2)} & & \\
 & &\ddots & \\
  & & & \hat{{\T{A}}}^{(n_3)}\end{array} \right]
\end{equation}
where $\hat{\T{A}}^{(i)}$ is the $i_{th}$ frontal slice of $\hat{\T{A}}$, $i = 1,2,...,n_3$.
\end{proof}
 Similar to the TNN we can seek a relaxation for the tensor-tubal-rank, we call tensor-tubal-norm (TTN). It is defined as the sum of $\ell_2$ norms of the tubes $\T{S}(i,i,:)$. However at this point we haven't proven that the TNN is a norm. Based on simulations and tests on a variety of cases we have the following conjecture.  
\begin{conjecture}
The tensor-tubal-norm is a norm and is the tightest convex relaxation to tensor-tubal-rank.
\end{conjecture}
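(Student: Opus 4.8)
I would try to establish the two assertions --- that $\|\cdot\|_{TTN}$ is a norm, and that it is the convex envelope of the tubal rank on the appropriate ball --- by first passing to the Fourier (third-mode DFT) domain, where $*$ is slicewise matrix multiplication. Writing $\hat{\T{A}}^{(j)}$ for the $j$-th frontal slice of $\hat{\T{A}}$, Algorithm~1 shows that the $i$-th singular tube $\T{S}(i,i,:)$ has DFT $\big(\sigma_i(\hat{\T{A}}^{(1)}),\dots,\sigma_i(\hat{\T{A}}^{(n_3)})\big)$, so by Parseval
\[
  \|\T{S}(i,i,:)\|_F=\tfrac{1}{\sqrt{n_3}}\Big(\textstyle\sum_{j}\sigma_i(\hat{\T{A}}^{(j)})^2\Big)^{1/2}=:\alpha_i(\T{A}),
\]
which is nonincreasing in $i$. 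Hence $\|\T{A}\|_{TTN}=\sum_i\alpha_i(\T{A})$; the tubal rank equals $\max_j\operatorname{rank}(\hat{\T{A}}^{(j)})=\#\{i:\alpha_i(\T{A})\neq 0\}$; and $\max_i\alpha_i(\T{A})=\alpha_1(\T{A})=\tfrac1{\sqrt{n_3}}\big(\sum_j\|\hat{\T{A}}^{(j)}\|_{\rm op}^2\big)^{1/2}=:\|\T{A}\|_{\rm sp}$, which is itself readily seen to be a norm (an $\ell_2$ aggregate of operator norms of the linear maps $\T{A}\mapsto\hat{\T{A}}^{(j)}$) and will play the role that the spectral norm plays for the matrix nuclear norm.

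\noindent\emph{Step 1 (norm).} I would prove the dual representation
\[
  \|\T{A}\|_{TTN}=\sup\{\langle\T{A},\T{B}\rangle:\|\T{B}\|_{\rm sp}\le 1\},
\]
which exhibits $\|\cdot\|_{TTN}$ as the support function of a compact, symmetric, convex body with $0$ in its interior, hence a norm (and convex, which Step~2 uses). The inequality ``$\le$'' comes from von Neumann's trace inequality on each Fourier slice, $\langle\T{A},\T{B}\rangle=\tfrac1{n_3}\sum_j\operatorname{Re}\langle\hat{\T{A}}^{(j)},\hat{\T{B}}^{(j)}\rangle\le\tfrac1{n_3}\sum_{i,j}\sigma_i(\hat{\T{A}}^{(j)})\sigma_i(\hat{\T{B}}^{(j)})$, then Cauchy--Schwarz over $j$ for each $i$ and Parseval, giving $\le\sum_i\alpha_i(\T{A})\alpha_i(\T{B})\le\|\T{A}\|_{TTN}\|\T{B}\|_{\rm sp}$. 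For ``$\ge$'' I would construct a maximizer from the t-SVD $\T{A}=\T{U}*\T{S}*\T{V}\Tra$: keep $\hat{\T{U}}^{(j)},\hat{\T{V}}^{(j)}$ and replace each singular value by $\sigma_i(\hat{\T{A}}^{(j)})/\alpha_i(\T{A})$ (and $0$ when $\alpha_i(\T{A})=0$), then invert the transform; this $\T{B}$ has $\|\T{B}\|_{\rm sp}=1$ and $\langle\T{A},\T{B}\rangle=\|\T{A}\|_{TTN}$.

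\noindent\emph{Step 2 (convex envelope).} Following the standard convex-envelope argument for matrix rank versus the nuclear norm, I would show that $\|\cdot\|_{TTN}$ is the biconjugate of $g:=\operatorname{tubalrank}(\cdot)+\iota_C(\cdot)$ on the ball $C:=\{\T{A}:\|\T{A}\|_{\rm sp}\le 1\}$. On $C$, $\|\T{A}\|_{TTN}=\sum_{i\le r}\alpha_i(\T{A})\le r\,\alpha_1(\T{A})\le r=\operatorname{tubalrank}(\T{A})$, so (using convexity from Step~1 and $g=+\infty$ off $C$) $\|\cdot\|_{TTN}\le g^{\ast\ast}$. For the reverse I would compute $g^\ast$ by splitting the supremum over the value $r$ of the tubal rank: the inner problem $\sup\{\langle\T{A},\T{B}\rangle:\|\T{A}\|_{\rm sp}\le1,\ \operatorname{tubalrank}(\T{A})\le r\}$ equals $\sum_{i\le r}\alpha_i(\T{B})$ (same upper bound truncated to the top $r$ indices, attained by the aligned construction of Step~1), so $g^\ast(\T{B})=\max_{0\le r\le\min(n_1,n_2)}\big(\sum_{i\le r}\alpha_i(\T{B})-r\big)=\sum_i(\alpha_i(\T{B})-1)_+$. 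Then for $\T{A}\in C$,
\[
  g^{\ast\ast}(\T{A})=\sup_{\T{B}}\Big(\langle\T{A},\T{B}\rangle-\textstyle\sum_i(\alpha_i(\T{B})-1)_+\Big)\le\sup_{\T{B}}\textstyle\sum_i\big(\alpha_i(\T{A})\alpha_i(\T{B})-(\alpha_i(\T{B})-1)_+\big),
\]
which decouples into scalar problems $\sup_{t\ge 0}(\alpha_i(\T{A})t-(t-1)_+)=\alpha_i(\T{A})$ --- here $\alpha_i(\T{A})\le1$ on $C$ is essential --- giving $g^{\ast\ast}(\T{A})\le\|\T{A}\|_{TTN}$. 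Combining, $g^{\ast\ast}=\|\cdot\|_{TTN}$ on $C$, i.e.\ $\|\cdot\|_{TTN}$ is the largest convex minorant of the tubal rank on the $\|\cdot\|_{\rm sp}$-ball. (When $n_3=1$ this is precisely ``nuclear norm $=$ convex envelope of rank on the spectral-norm ball''.)

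\noindent\emph{Expected obstacle.} Two points require care. The routine one is realness bookkeeping: the Fourier slices of a real tensor are conjugate-symmetric, $\hat{\T{A}}^{(n_3+2-j)}=\overline{\hat{\T{A}}^{(j)}}$, and the dual certificates above must be assembled so that the resulting tensors come out real --- which works because conjugate slices have conjugate SVDs and the rescalings are real, but it has to be argued explicitly. The genuinely delicate point, and most likely the reason this is stated as a conjecture, is the choice of ambient ball: the companion of $\|\cdot\|_{TTN}$ is $\|\cdot\|_{\rm sp}=\|\T{S}(1,1,:)\|_F$, which is \emph{not} the t-product operator norm $\max_j\|\hat{\T{A}}^{(j)}\|_{\rm op}$; the clean envelope statement holds on the $\|\cdot\|_{\rm sp}$-ball, whereas on the t-product operator-norm ball the envelope is a strictly larger function (already visible for $n_1=n_2=1$, $n_3=2$, where $\|\cdot\|_{\rm sp}$ is the Euclidean norm of the tube while the t-product operator norm is its $\ell_1$ norm). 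So ``TTN is the tightest convex relaxation of the tubal rank'' is literally correct only once the normalization is pinned to $\|\cdot\|_{\rm sp}$; identifying and justifying that normalization is the crux.
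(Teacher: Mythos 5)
Two preliminary facts frame this review. The paper contains no proof of this statement: it is offered as a conjecture, supported only by ``simulations and tests on a variety of cases,'' and the authors state explicitly that they have not even shown TTN to be a norm. So there is no proof of record to compare your route against; your argument has to stand alone, and it does not. The decisive gap is the ``$\ge$'' half of your Step 1. Your certificate $\T{B}$ replaces the $(i,j)$ singular value by $\sigma_i(\hat{\T{A}}^{(j)})/\alpha_i(\T{A})$, and you assert $\|\T{B}\|_{\rm sp}=1$; but $\|\hat{\T{B}}^{(j)}\|_{\rm op}=\max_i \sigma_i(\hat{\T{A}}^{(j)})/\alpha_i(\T{A})$, and after dividing by the \emph{decreasing} sequence $\alpha_i$ the maximum need not occur at $i=1$. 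Concretely, with $n_1=n_2=n_3=2$ (so the DFT is real and any pair of real Fourier slices is attainable), take $\hat{\T{A}}^{(1)}=\mathrm{diag}(1,0)$ and $\hat{\T{A}}^{(2)}=\mathrm{diag}(1,1)$: then $\alpha_1=1$, $\alpha_2=1/\sqrt2$, and your $\T{B}$ has Fourier slices $\mathrm{diag}(1,0)$ and $\mathrm{diag}(1,\sqrt2)$, so $\|\T{B}\|_{\rm sp}=\sqrt{3/2}>1$. Moreover no feasible certificate can close the gap: for any $\T{B}$ with $\|\T{B}\|_{\rm sp}\le1$, slicewise H\"older plus Cauchy--Schwarz over $j$ gives $\langle\T{A},\T{B}\rangle\le\frac{1}{\sqrt{n_3}}\bigl(\sum_j\|\hat{\T{A}}^{(j)}\|_*^2\bigr)^{1/2}$, the $\ell_2$-over-slices aggregate of slice nuclear norms, which is the true dual norm of $\|\cdot\|_{\rm sp}$ and is strictly smaller than $\|\T{A}\|_{TTN}$ whenever the slice spectra are not proportional (in the example, $\sqrt{5/2}<1+1/\sqrt2$). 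So TTN is not the support function of the sp-ball, and the inner supremum computation in Step 2 fails for the same alignment reason.

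The deeper problem is that the defect cannot be repaired by a better choice of dual set, because TTN as defined (sum of $\ell_2$ norms of the canonical, slicewise-sorted singular tubes) is not convex at all. Take $n_3=2$ and let $\T{X}$ have Fourier slices $\mathrm{diag}(1,0),\ \mathrm{diag}(1,0)$ and $\T{Y}$ have Fourier slices $\mathrm{diag}(1,0),\ \mathrm{diag}(0,1)$ (both correspond to real tensors). Then $\|\T{X}\|_{TTN}=\|\T{Y}\|_{TTN}=1$, while $\T{X}+\T{Y}$ has Fourier slices $\mathrm{diag}(2,0)$ and $\mathrm{diag}(1,1)$, hence singular tubes with Fourier entries $(2,1)$ and $(0,1)$ and $\|\T{X}+\T{Y}\|_{TTN}=\tfrac{1}{\sqrt2}(\sqrt5+1)\approx 2.29>2$: the triangle inequality fails. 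A function that is not convex can be neither a support function nor the convex envelope of the tubal rank on any ball, so both halves of your plan collapse, and the honest conclusion is that the conjecture, read with this definition, is false (or must be reformulated, e.g.\ with a different aggregation across slices) rather than merely unproven. The structural reason is visible in your own Fourier-domain formula: TTN is an $\ell_1$ over the sorted index $i$ of an $\ell_2$ over slices $j$, and the coupling through the per-slice sorting is what destroys convexity; the paper's TNN escapes this because it is literally the nuclear norm of the linear image $\mathrm{blkdiag}(\hat{\T{A}})$, and the genuine dual object of your $\|\cdot\|_{\rm sp}$ is the $\ell_2$-of-nuclear-norms quantity above, not TTN.
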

\vspace{-1.5mm}
We conclude this section with the following important remark. 
\begin{remark}
It may appear that our approach can be generalized by taking any orthogonal transformations along third dimension followed by sampling each slice in the transformed domain. However this perspective cannot be easily extended in the t-SVD  framework, as the notions of transpose does not extend to all transforms. In fact, there is an underlying algebraic framework, namely that of viewing in the third order tensors as linear operators on a \emph{module} \cite{MacLaneBook} of oriented matrices with multiplication defined over a commutative ring with identity \cite{Braman2010}. For sake of brevity we will not discuss these aspects in this paper. 
\end{remark}

\vspace{-2mm} 
\vspace{-3mm} 
\section{Tensor completion/prediction from missing observations}
\vspace{-2mm} 
\label{sec:tcompletion}
We will show the case when the tensor data is simply decimated randomly or down sampled in this section. Specifically we consider the problem of data completion from missing entries for multilinear signals.  Suppose there is an unknown tensor $\T{M}$ of size $n_1 \times n_2 \times n_3$ which is {\bf \emph{assumed to have a low tubal-rank}} and we are given a subset of entries $ \{ \T{M}_{ijk}:(i,j,k) \in \bold{\Omega} \}$ where $\bold{\Omega}$ is an indicator tensor of size $n_1 \times n_2 \times n_3$. Our objective is to recover the entire $\T{M}$. This section develops an algorithm for addressing this problem via solving the following complexity penalized algorithm:
\begin{equation}
\begin{aligned}
\label{eq:originalTNN}
\mbox{min} \hspace{4mm}&\|\T{X}\|_{TNN} \\
\mbox{subject to } \,\,\,&P_{\bold{\Omega}} ( \T{X} ) = P_{\bold{\Omega}} ( \T{M} )
\end{aligned}
\end{equation}
\noindent where $P_\bold{\Omega}$ is the orthogonal projector onto the span of tensors vanishing outside of $\bold{\Omega}$. So the $(i,j,k)_{th}$ component of $ P_{\bold{\Omega}} ( \T{X} )$ is equal to $\T{M}_{ijk}$ if $(i,j,k) \in \bold{\Omega}$ and zero otherwise. Let $\T{Y}$ be the available (sampled) data: $\T{Y} = P_\bold{\Omega} \T{M}$. Define $\mathcal{G}=\mathscr{F}_3 P_\bold{\Omega} \mathscr{F}_{3}^{-1}$ where $\mathscr{F}_3$ and $\mathscr{F}^{-1}_3$ are the operators representing the Fourier and inverse Fourier transform along the third dimension of tensors. Then we have
$\hat{\T{Y}} = \mathcal{G} (\hat{\T{M}})$ where $\hat{\T{Y}}$ and $\hat{\T{M}}$ are the Fourier transforms of $\T{Y}$ and $\T{M}$ along the third mode.
So (\ref{eq:originalTNN}) is equivalent with the following:
\begin{equation}
\begin{aligned}
\label{eq:fourierdomain_min}
\mbox{min} \hspace{4mm}&||\mbox{blkdiag}(\hat{\T{X}}) ||_*\\
\mbox{subject to } \,\,\,\,\,&\hat{\T{Y}} = {\cal G}(\hat{\T{X}}) 
\end{aligned}
\end{equation}
where $\hat{\T{X}}$ is the Fourier transform of $\T{X}$ along the third dimension and $\text{blkdiag}(\hat{\T{X}})$ is defined in (\ref{eq:blkdiag}). Noting that $\|\T{X}\|_{TNN} = ||\text{blkdiag}(\hat{\T{X}}) ||_*$. To solve the optimization problem, one can re-write (\ref{eq:fourierdomain_min}) equivalently as follows: 
\begin{equation}
\begin{aligned}
\min \hspace{4mm}& ||\text{blkdiag}( \hat{\T{Z}} )||_* + \mathbf{1}_{\hat{\T{Y}} = {\cal G}(\hat{\T{X}})} \\
\mbox{subject to } \,\,\,\,\, &\hat{\T{X}} - \hat{\T{Z}} = 0  
\end{aligned}
\end{equation}
where $\mathbf{1}$ denotes the indicator function. Then using the general framework of Alternating Direction Method of Multipliers(ADMM)\cite{Boyd_ADMM} we have the following recursion, 
\begin{align}
\label{eq:proj}
\hat{\T{X}}^{k+1}& = \arg\min_{\hat{\T{X}}} \left\{ \mathbf{1}_{\T{Y} = {\cal G}(\hat{\T{X}})} + \langle \hat{\T{Q}}^{k}, \hat{\T{X}} \rangle  + \frac{1}{2} || \hat{\T{X}} - \hat{\T{Z}}^{k}||_{2}^{2} \right\}\nonumber \\
& = \arg\min_{\hat{\T{X}}: \T{Y} = {\cal G}(\hat{\T{X}})} \left\{|| \hat{\T{X}} - (\hat{\T{Z}}^{k} - \hat{\T{Q}}^{k})||_{2}^{2} \right\}\\
\label{eq:shrink}
\hat{\T{Z}}^{k+1}    & = \arg \min_{\T{Z}} \left\{ \frac{1}{\rho} ||\text{blkdiag}(\hat{\T{Z}})||_* + \frac{1}{2} ||  \hat{\T{Z}} - (\hat{\T{X}}^{k+1} + \hat{\T{Q}}^{k})||_{2}^{2} \right\}\\
\hat{\T{Q}}^{k+1} &= \hat{\T{Q}}^{k} + \left(\hat{\T{X}}^{k+1} - \hat{\T{Z}}^{k+1}\right)
\end{align}
where Equation~(\ref{eq:proj}) is least-squares projection onto the constraint and the solution to Equation~(\ref{eq:shrink}) is given by the singular value thresholding\cite{Watson92,Cai_SVT}. 

\textbf{Equivalence of the algorithm to \emph{iterative singular-tubal shrinkage via convolution}}:
According to the particular format that (\ref{eq:shrink}) has, we can break it up into $n_3$ independent minimization problems. Let $\hat{\T{Z}}^{k+1,(i)}$ denotes the $i_{th}$ frontal slice of $\hat{\T{Z}}^{k+1}$. Similarly define $\hat{\T{X}}^{k+1,(i)}$ and $\hat{\T{Q}}^{k,(i)}$. Then (\ref{eq:shrink}) can be separated as:
\begin{equation}
\begin{aligned}
\label{eq:shrink2}
\hat{\T{Z}}^{k+1,(i)}   & = \arg \min_{W} \left\{ \frac{1}{\rho} ||W||_* + \frac{1}{2} ||  W - (\hat{\T{X}}^{k+1,(i)} + \hat{\T{Q}}^{k,(i)})||_{2}^{2} \right\}
\end{aligned}
\end{equation}
for $i = 1,2,...,n_3$. This means each $i_{th}$ frontal slice of $\hat{\T{Z}}^{k+1}$ can be calculated through (\ref{eq:shrink2}). The solution to (\ref{eq:shrink2}) can be found in\cite{Watson92,Cai_SVT}. In detail, if $U S V^\text{T} = (\hat{\T{X}}^{k+1,(i)} + \hat{\T{Q}}^{k,(i)})$ is the SVD of $(\hat{\T{X}}^{k+1,(i)} + \hat{\T{Q}}^{k,(i)})$, then the solution to (\ref{eq:shrink2}) is $U D_\tau (S) V^\text{T}$, where $D_\tau (S) = \text{diag}(S_{i,i} - \tau)_+$ for some positive constant $\tau$ and $``+"$ means keeping the positive part. This is equivalent to multiplying $(1-\frac{\tau}{S_{i,i}})_+$ to the $i_{th}$ singular value of $S$. So each frontal slice of $\hat{\T{Z}}^{k+1}$ can be calculated using this shrinkage on each frontal slice of $(\hat{\T{X}}^{k+1} + \hat{\T{Q}}^{k})$. Let  $\T{U} * \T{S}* \T{V}^\text{T} = (\T{X}^{k+1} + \T{Q}^{k})$ be the t-SVD of $(\T{X}^{k+1} + \T{Q}^{k})$ and $\hat{\T{S}}$ be the Fourier transform of $\T{S}$ along the third dimension. Then each element of the singular tubes of $\hat{\T{Z}}^{k+1}$ is the result of multiplying every entry $\hat{\T{S}}(i,i,j)$ with $(1-\frac{\tau}{\hat{\T{S}}(i,i,j)})_+$ for some $\tau > 0$. Since this process is carried out in the Fourier domain, in the original domain it is equivalent to convolving each tube $\T{S}(i,i,:)$ of $\T{S}$ with a real valued tubal vector $\vec{\tau}_i$ which is the inverse Fourier transform of the vector $[ ( 1-\frac{\tau_{i}(1)}{\hat{\T{S}}(i,i,1)})_+ , ( 1-\frac{\tau_{i}(2)}{\hat{\T{S}}(i,i,2)})_+ , ... , ( 1-\frac{\tau_{i}(n_3)}{\hat{\T{S}}(i,i,n_3)})_+ ]$. This operation can be captured by $\T{S}*\T{T}$, where $\T{T}$ is an f-diagonal tensor with $i_{th}$ diagonal tube to be $\vec{\tau}_i$. Then  $\T{Z}^{k+1} = \T{U}* ( \T{S}*\T{T} )* \T{V}^\text{T}$. In summary, the shrinkage operation in the Fourier domain on the singular values of each of the frontal faces is equivalent to performing a \emph{tubal shrinkage} via convolution in the original domain.


\vspace{-3mm} 
\section{Applications - Video data compression and recovery}
\label{sec:video_app}
\vspace{-2mm} 
\subsection{Compressibility of multi-linear data using t-SVD}
\vspace{-2mm} 
\label{sec:compression}
We outline two methods for compression based on t-SVD and compare them with the traditional truncated SVD based approach in this section. Note that we don't compare with truncated HOSVD or other tensor decompositions as there is no notion of optimality for these decompositions in contrast to truncated t-SVD and truncated SVD. 

The use of SVD in matrix compression has been widely studied in \cite{1093309} and \cite{465543}. 
For a matrix $A \in \Real^{m \times n}$ with its SVD $A = USV^T$, the rank $r$ approximation of $A$ is the matrix $A_r = U_r S_r V^\text{T}_r$, where $S_r$ is a $r \times r$ diagonal matrix with $S_r(i,i) = S(i,i), i = 1,2,...,r$. $U_r$ consists of the first $r$ columns of $U$ and $V^\text{T}_r$ consists of the first $r$ rows of $V^\text{T}_r$. The compression is measured by the ratio of the total number of entries in $A$, which is $mn$, to the total number of entries in $U_r$, $S_r$ and $V^\text{T}_r$, which is equal to $(m+n+1)r$. Extending this approach to a third-order tensor $\T{M}$ of size $n_1 \times n_2 \times n_3$, we vectorize each frontal slice and save it as a column, so we get an $n_1 n_2 \times n_3$ matrix. Then the compression ratio of rank $k_1$ SVD approximation is 
\begin{align}
\text{ratio}_{\text{SVD}} = \frac{n_1 n_2 n_3}{n_1 n_2 k_1 + k_1 + n_3 k_1} = \frac{n_1 n_2 n_3}{k_1(n_1 n_2 +n_3+1)}
\end{align}
where $1 \le k_1 \le \text{min}(n_1n_2,n_3)$. Generally even with small $k_1$, the approximation $M_{k_1}$ gets most of the information of $\T{M}$. 

{\bf Method 1} : Based on t-SVD our first method for compression, which we call t-SVD compression, basically follows the same idea of truncated SVD but in the \emph{Fourier domain}. For an $n_1 \times n_2 \times n_3$ tensor $\T{M}$, we use Algorithm 1 to get $\hat{\T{M}}$, $\hat{\T{U}}$, $\hat{\T{S}}$ and $\hat{\T{V}}^\text{T}$. It is known that $\hat{\T{S}}$ is a f-diagonal tensor with each frontal slice is a diagonal matrix. So the total number of f-diagonal entries of $\hat{\T{S}}$ is $n_0 n_3$ where $n_0 = \text{min} (n_1,n_2)$.  We choose an integer $k_2$, $1 \le k_2 \le n_0 n_3$ and keep the $k_2$ largest f-diagonal entries of $\hat{\T{S}}$ then set the rest to be $0$. If $\hat{\T{S}}(i,i,j)$ is set to be $0$, then let the corresponding columns $\hat{\T{U}}(:,i,j)$ and $\hat{\T{V}}^\text{T}(:,i,j)$ also be $0$. We then call the resulting tensors $\hat{\T{U}}_{k_2}$, $\hat{\T{S}}_{k_2}$ and $\hat{\T{V}}^\text{T}_{k_2}$. So the approximation is $\T{M}_{k_2} = \T{U}_{k_2} * \T{S}_{k_2} * \T{V}^\text{T}_{k_2}$ where $\T{U}_{k_2}$, $\T{S}_{k_2}$ and $\T{V^\text{T}}_{k_2}$ are the inverse Fourier transforms of $\hat{\T{U}}_{k_2}$, $\hat{\T{S}}_{k_2}$ and $\hat{\T{V}}^\text{T}_{k_2}$  along the third dimension.  
The compression ratio rate for this method is
\begin{align}
\text{ratio}_\text{t-SVD} = \frac{n_1 n_2 n_3}{(n_1 \times \frac{k_2}{n_3} + n_2 \times \frac{k_2}{n_3} +\frac{k_2}{n_3}) \times n_3} = \frac{n_1 n_2 n_3}{k_2(n_1+n_2+1)}
\end{align}
where $1 \le k_2 \le n_0 n_3$.

{\bf Method 2}:  Our second method for compressing is called t-SVD-tubal compression and is also similar to truncated SVD but in the \emph{t-product} domain. As in \textbf{Theorem 2.3.1}, we take the first $k_3$ tubes $ ( 1\le k_3 \le n_0) $ in $\T{S}$ and get the approximation $\T{M}_{k_3} = \sum_{i=1}^{k_3} \T{U}(:,i,:) * \T{S}(i,i,:) * \T{V}(:,i,:)^\text{T} $.

Compression ratio rate for the second method is 
\begin{align}
\text{ratio}_\text{t-SVD-tubal} = \frac{n_1 n_2 n_3}{(n_1 k_3 + n_2 k_3 + k_3)n_3} = \frac{n_1 n_2}{k_3(n_1 + n_2 +1)}
\end{align}
where $1 \le k_3 \le n_0$.

%
%
%

\begin{figure}[h]
\centering \makebox[0in]{
    \begin{tabular}{c c}
    \includegraphics[width = .16\textwidth]{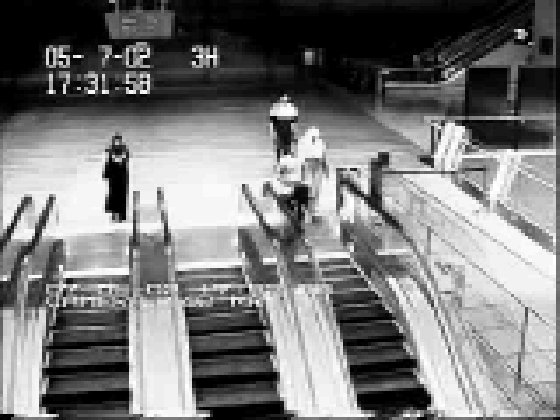}
      \includegraphics[width = .16\textwidth]{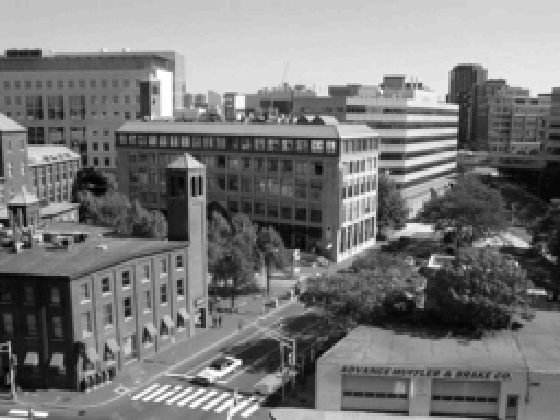}
      \includegraphics[width = .16\textwidth]{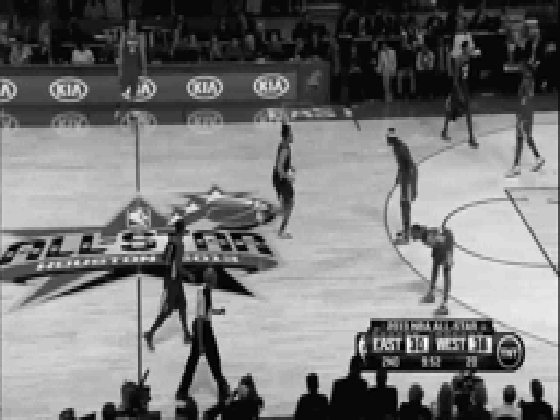}&
      \includegraphics[width = .16\textwidth]{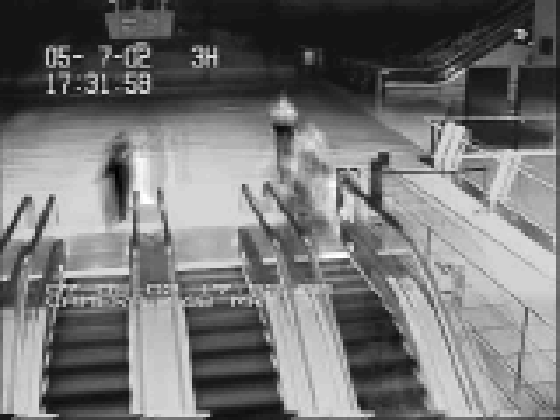}  
      \includegraphics[width = .16\textwidth]{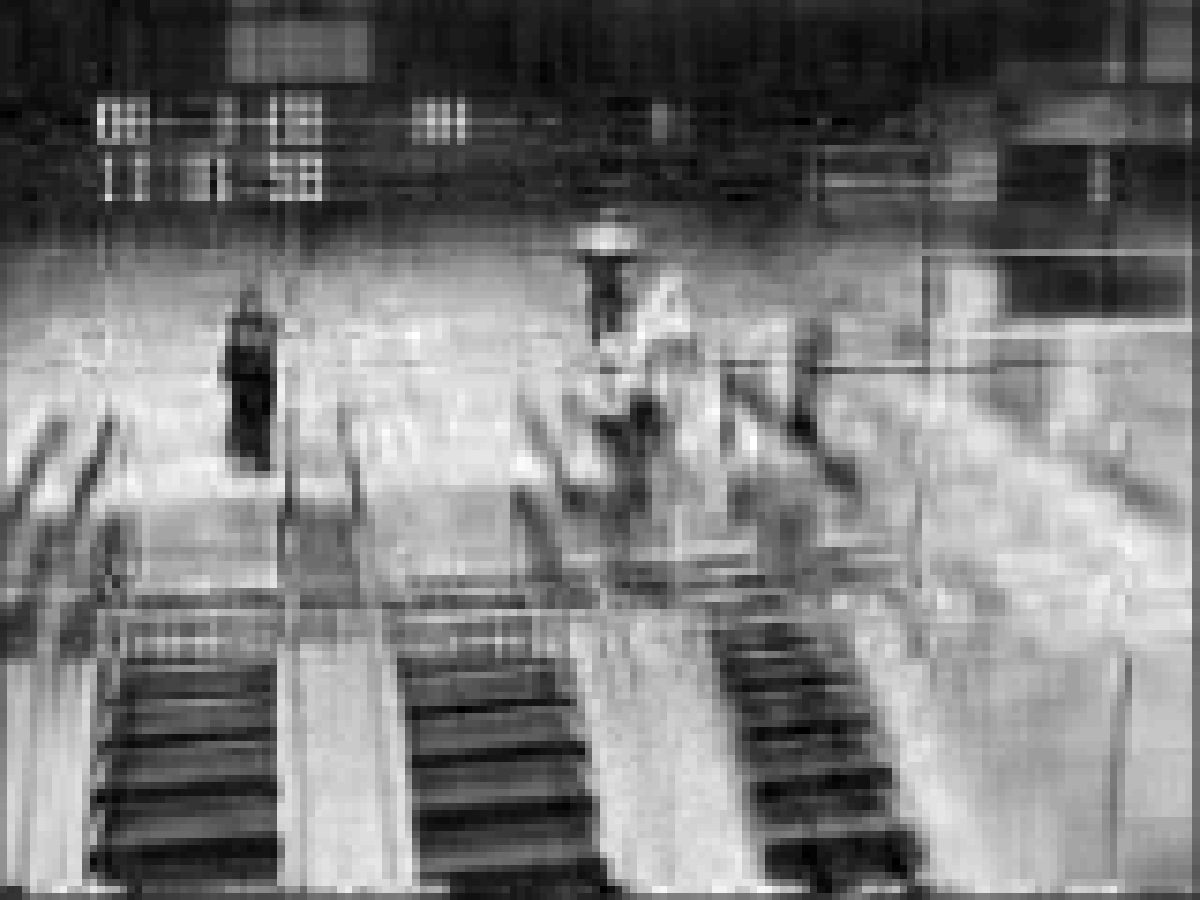} 
      \includegraphics[width = .16\textwidth]{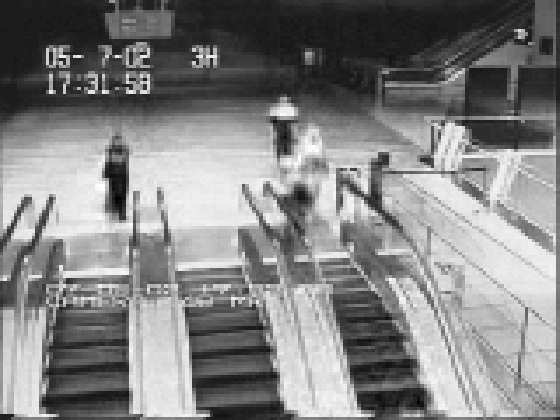} \\
      (a) & (b) \\
      \includegraphics[width = .16\textwidth]{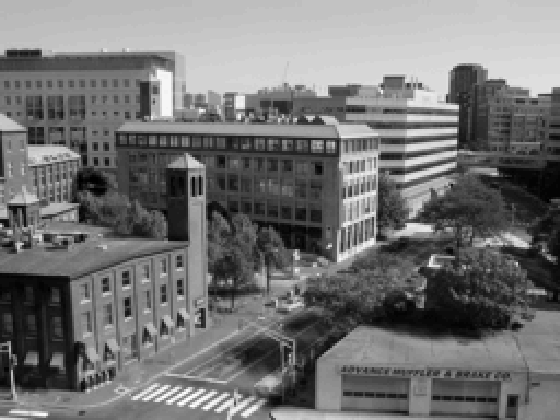}
      \includegraphics[width = .16\textwidth]{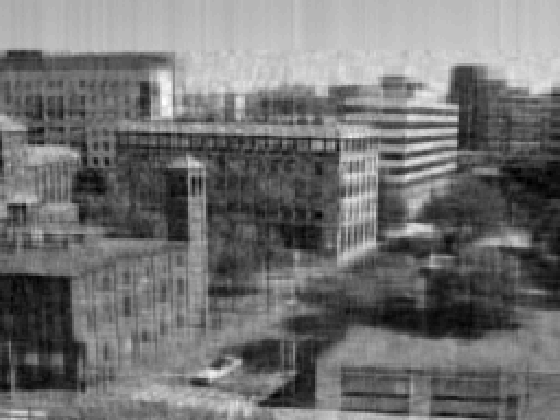}
      \includegraphics[width = .16\textwidth]{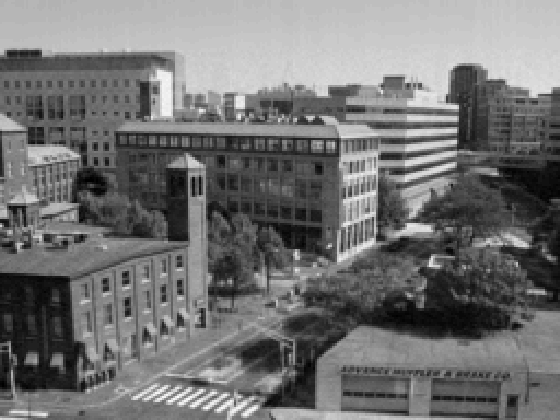} &
      \includegraphics[width = .16\textwidth]{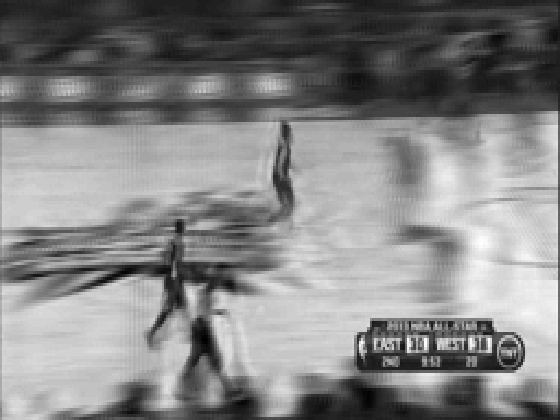}
      \includegraphics[width = .16\textwidth]{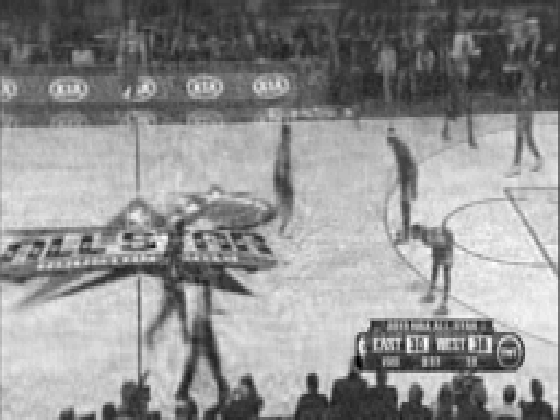}
      \includegraphics[width = .16\textwidth]{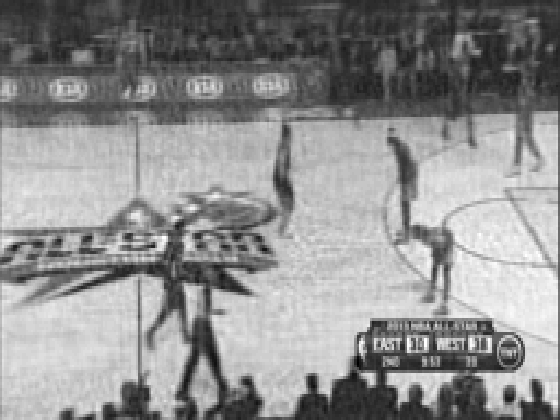} \\
      (c) & (d) 
      \end{tabular}}
  \caption{ (\textbf{a}) Three testing videos: escalator video, MERL video and basketball video. (b) (c) (d) are compression results under compression ratio 5. For (b) (c) (d) from left to right: SVD compression, t-SVD-tubal compression and t-SVD compression}
\label{fig:compress_recovery}
\end{figure}
We now illustrate the performance of SVD based compression, t-SVD compression and t-SVD tubal compression on 3 video datasets shown in Figure~\ref{fig:compress_recovery}-(a).
\vspace{-2mm}
\begin{enumerate}
\itemsep0em
\item The first video, referred to as the Escalator video, (source: http://www.ugcs.caltech.edu/~srbecker/rpca.shtml\#2 ) of size $130 \times 160 \times 50$ (length $\times$ width $\times$ frames) from a stationary camera. 
\item The second video, referred to as the MERL video, is a {\bf \emph{time lapse video}} of size $192 \times 256 \times 38$ also from a stationary camera ({\bf data courtesy}: Dr. Amit Agrawal, Mitsubishi Electric Research Labs (MERL), Cambridge, MA). 
\item The third video, referred to as the Basketball video is a $144 \times 256 \times 80$ video (source: YouTube) with a  {\bf \emph{non-stationary panning camera}} moving from left to right horizontally following the running players. 
\end{enumerate}
\vspace{-2mm}
Figure~\ref{fig:compress_recovery} (b) to (d) show the compression results for the 3 videos when truncated according to vectorized SVD and t-SVD compression (method 1) and t-SVD tensor tubal compression (method 2). In Figure~\ref{fig:compress_rse} we show the relative square error (RSE) comparison for different compression ratio where RSE is defined in dB as $\text{RSE} = 20 \log_{10} ( \|\T{X}_{\text{rec}} - \T{X}\|_\text{F} / \|\T{X}\|_\text{F} )$. In all of the 3 results, the performance of t-SVD compression (method 1) is the best. This implies that tensor multi-rank fits very well for video datasets from both stationary and non-stationary cameras. SVD compression method (based on vectorization) has a better performance over the t-SVD-tubal compression on the Escalator and MERL video. 
However, t-SVD tubal compression (method 2) works much better than SVD compression on the Basketball video. This is because in the videos where the camera is panning or in motion, one frontal slice of the tensor to the next frontal slice can be effectively represented as a shift and scaling operation which in turn is captured by a convolution type operation and t-SVD is based on such an operation along the third dimension.

\begin{figure}[htbp]
\centering \makebox[0in]{
    \begin{tabular}{c c c}
      \includegraphics[width = .33\textwidth]{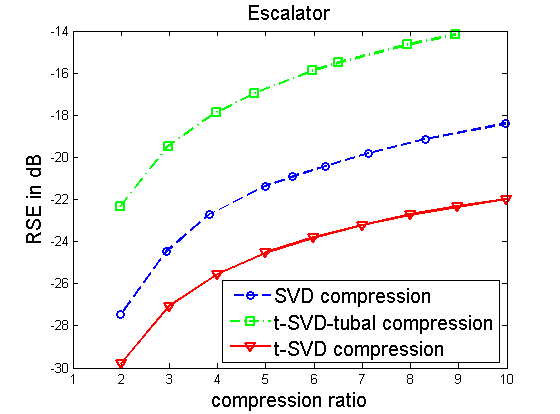}
      
      \includegraphics[width = .33\textwidth]{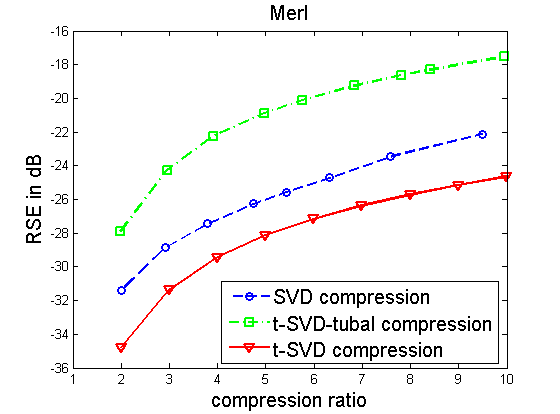}
      
      \includegraphics[width = .33\textwidth]{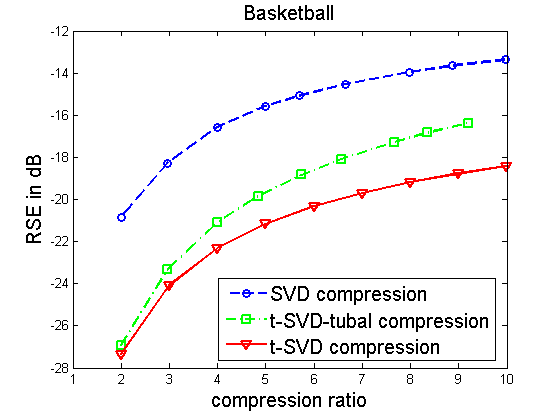}
      
      \end{tabular}}
  \caption{ Compression ratio and RSE comparison for 3 videos.}
\label{fig:compress_rse}
\end{figure}

%
%

\begin{figure}[htbp]
\centering \makebox[0in]{
    \begin{tabular}{c c c c }
      \includegraphics[width = .20\textwidth]{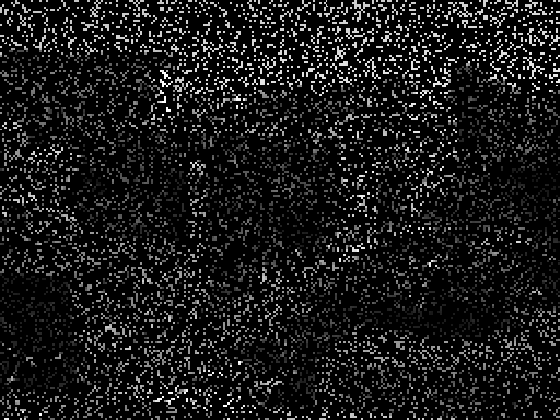}  
      \includegraphics[width = .20\textwidth]{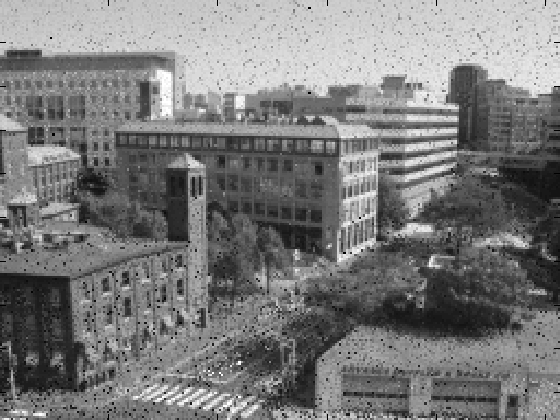}      
      \includegraphics[width = .20\textwidth]{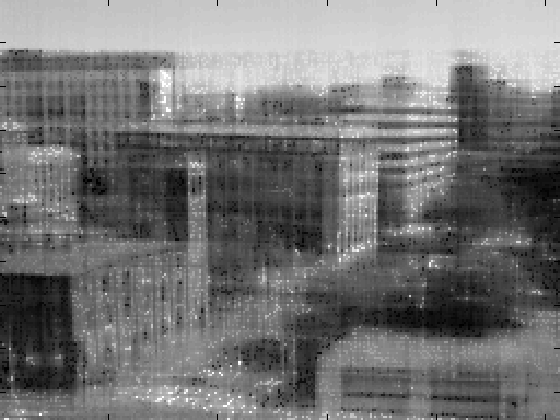}      
      \includegraphics[width = .20\textwidth]{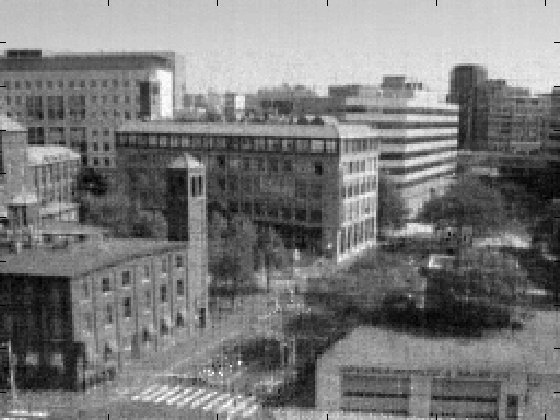} \\     
   
      \includegraphics[width = .20\textwidth]{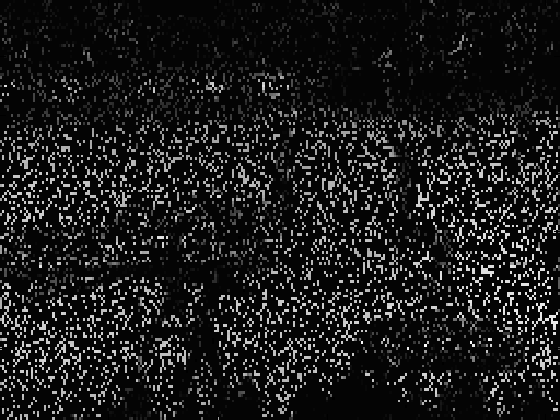}     
      \includegraphics[width = .20\textwidth]{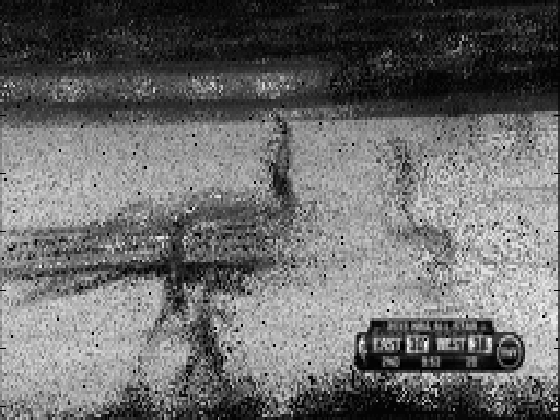}    
      \includegraphics[width = .20\textwidth]{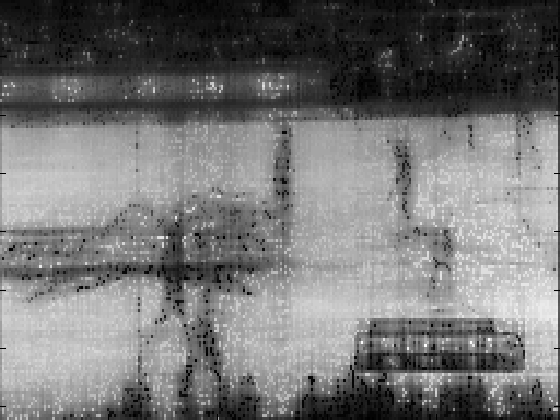}     
      \includegraphics[width = .20\textwidth]{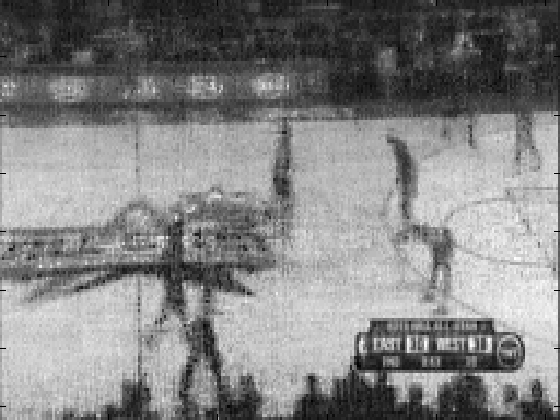} 
      \end{tabular}}
  \caption{ Compression results for MERL and Basketball video. From left to right: Sampled video($20\%$), Nuclear norm minimization (vectorization and SVD based) result, LRTC result, TNN minimization result. \textbf{First row}: MERL video. \textbf{Second row}: Basketball video.}
\label{fig:recovery}
\end{figure}

\vspace{-2mm} 
\subsection{Performance evaluation of video data completion}
\vspace{-2mm} 
In this part we test 3 algorithms for video data completion from randomly missing entries:  TNN minimization of Section \ref{sec:tcompletion}, Low Rank Tensor Completion (LRTC) algorithm in \cite{Ji_PAMI12}, which uses the notion of tensor-n-rank \cite{GandyRY2011}, and the nuclear norm minimization on the vectorized video data using the algorithm in \cite{Cai_SVT}. As an application of the t-SVD to higher order tensor we also show performance on a 4-D color Basketball video data of size $144 \times 256 \times 3 \times 80$. 

Figures ~\ref{fig:recovery} and \ref{fig:basketball_color} show the results of recovery using the 3 algorithms. Figure~\ref{fig:recovery_rse} shows the RSE (dB) plots for sampling rates ranging from $10\%$ to $90\%$ where the sampling rate is defined to be the percentage of pixels which are known. Results from the figures show that the TNN minimization algorithm gives excellent reconstruction over the LRTC and Nuclear norm minimization.  These results are in line with the compressibility results in Section~\ref{sec:compression}.

\begin{figure}[htbp]
\centering \makebox[0in]{
    \begin{tabular}{c c c }
      \includegraphics[width = .22\textwidth]{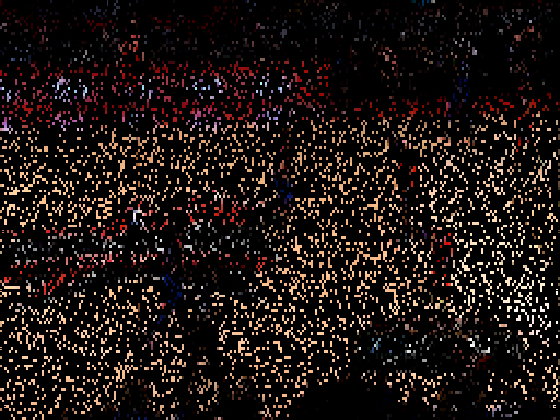}  \hspace{2mm}
      \includegraphics[width = .22\textwidth]{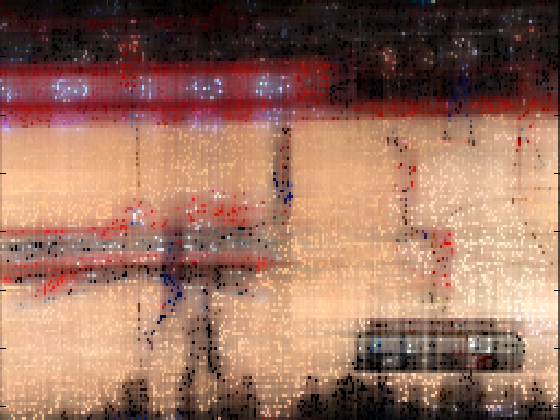}     \hspace{2mm}
      \includegraphics[width = .22\textwidth]{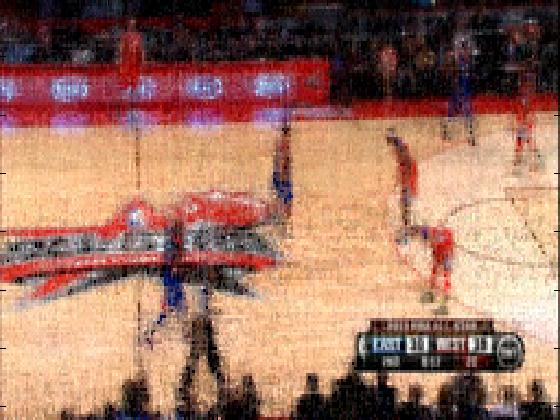} \\
      \end{tabular}}
  \caption{ Recovery for color basketball video: \textbf{Left}: Sampled Video($10\%$). \textbf{Middle}: LRTC recovery. \textbf{Right}: Tensor-nuclear-norm minimization recovery }
\label{fig:basketball_color}
\end{figure}    

\begin{figure}[htbp]
\centering \makebox[0in]{
    \begin{tabular}{c c }
    \includegraphics[width = .45\textwidth]{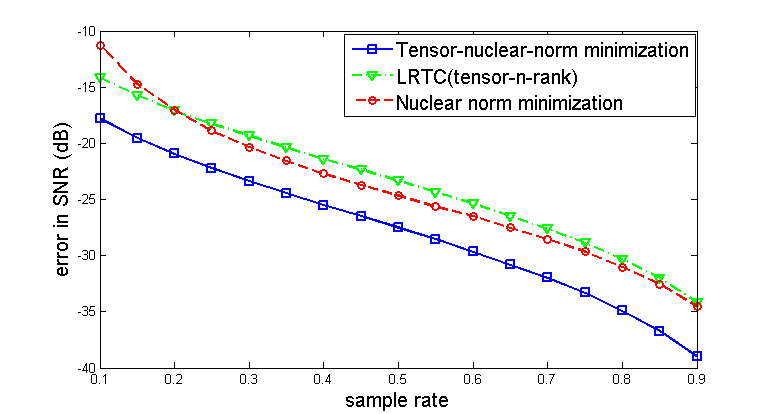} \hspace{5mm}
    \includegraphics[width = .45\textwidth]{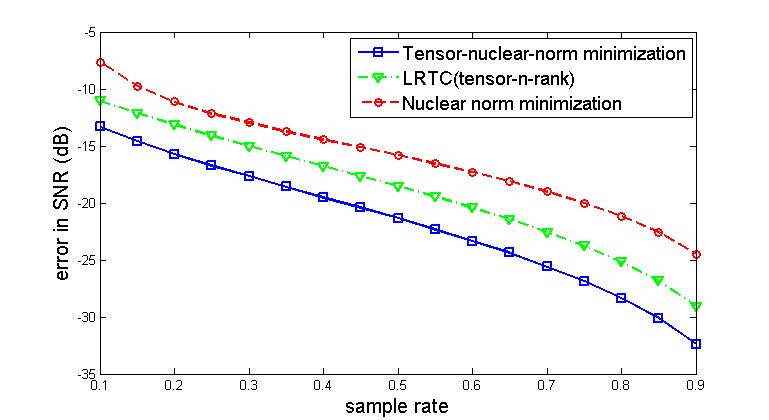}
    \end{tabular}}
  \caption{ RSE (dB) plot against sampling rate \textbf{Left}: MERL video. \textbf{Right}: Basketball video }
\label{fig:recovery_rse}
\end{figure}


\vspace{-4mm} 
\section{Future work - collaborative filtering of multi-linear data}
\label{sec:future_work}
\vspace{-2mm} 
Our work has direct implications to study prediction and recovery of \emph{multi-linear} signals. For example, similar to the scenario of \emph{collaborative filtering} with bounded trace norm for 2-D matrices considered in \cite{Hazan_JMLR12}, one can consider a multi-linear version of the problem and look at collaborative filtering with bounded tensor norms which in turn are derived from the particular tensor decomposition, namely the t-SVD. Due to its optimality properties as well as algebraic grounding, one can look for similar identifiability results similar to the matrix case. Extension of our algorithm in Section \ref{sec:tcompletion} to address this sequential online setting for collaborative filtering of multi-linear data and providing provably optimal learning guarantees is an important research direction. 

%

\newpage 

\pagestyle{empty}
\bibliographystyle{IEEEtran}
\bibliography{bibtensor,SAbib,elmbib,ZZbib}

\end{document}